\newtheorem{theorem}{Theorem}[section]
\newtheorem{definition}{Definition}
\newtheorem{proposition}[theorem]{Proposition}
\newtheorem{problem}{Problem}
\newtheorem{remark}{Remark}
\newcommand{\setdef}[2]{\{#1 \; : \; #2\}}
\newcommand{\real}{\mathbb{R}}
\newcommand{\mc}{\mathcal}
\DeclareSymbolFont{bbold}{U}{bbold}{m}{n}
\DeclareSymbolFontAlphabet{\mathbbold}{bbold}
\newcommand\oprocendsymbol{\hbox{$\square$}}
\newcommand\oprocend{\relax\ifmmode\else\unskip\hfill\fi\oprocendsymbol}
\newcommand*{\QEDA}{\hfill\ensuremath{\blacksquare}}%
\begin{document}
\title{\bf Data-Driven Attack Detection for Linear
  Systems
  }
  \author{Vishaal Krishnan and Fabio Pasqualetti \thanks{This
    material is based upon work supported in part by
    UCOP-LFR-18-548175 and AFOSR-FA9550-19-1-0235. Vishaal Krishnan
    and Fabio Pasqualetti are with the Department of Mechanical
    Engineering, University of California at Riverside, Riverside, CA
    92521 USA.  E-mail:
    \href{mailto:vishaalk@ucr.edu}{\texttt{vishaalk@ucr.edu}},\href{mailto:fabiopas@engr.ucr.edu}{\texttt{fabiopas@engr.ucr.edu.}}}}
\maketitle

\thispagestyle{empty}

\begin{abstract}
  This paper studies the attack detection problem in a data-driven and
  model-free setting, for deterministic systems with linear and
  time-invariant dynamics. Differently from existing studies that
  leverage knowledge of the system dynamics to derive security bounds
  and monitoring schemes, we focus on the case where the system
  dynamics, as well as the attack strategy and attack location, are
  unknown. We derive fundamental security limitations as a function of
  only the observed data and without estimating the system dynamics
  (in fact, no assumption is made on the identifiability of the
  system). In particular, (i) we derive detection limitations as a
  function of the informativity and length of the observed data, (ii)
  provide a data-driven characterization of undetectable attacks, and
  (iii) construct a data-driven detection monitor. Surprisingly, and
  in accordance with recent studies on data-driven control, our
  results show that model-based and data-driven security techniques
  share the same fundamental limitations, provided that the collected
  data remains sufficiently~informative.
\end{abstract}


\begin{keywords}
  Data-driven security and attack detection.
\end{keywords}

\section{Introduction}
\label{sec:intro}
The increasing integration of the
cyber and physical layers in many real-world systems 
has led to the emergence of cyber-physical
systems security as a prominent engineering discipline,
with attack monitoring forming a crucial component.
Its methods differ from traditional information security 
techniques which lack an appropriate
abstraction of the physical layer~\cite{AAC-SA-SS:08}
and are inadequate for the protection of cyber-physical systems,
where the target is often the underlying dynamics of the 
physical system.

Attack monitoring methods for cyber-physical systems 
can be broadly classified into model-based and data-driven 
approaches. 
While the latter relies only on the data, 
generated in the form of measurements from sensors 
deployed on the physical system,
the former additionally assumes knowledge of the model of 
the underlying system.
Clearly, from the perspective of implementation,
model-based monitoring methods~\cite{FP-FD-FB:10y, YC-SK-JMFM:16}
have to first contend with the difficulty 
of obtaining a reliable model for the underlying system.
This is in practice achieved by system identification, for which the
available data must be adequately informative -- a requirement that
is difficult to meet for complex systems.
Moreover, it is unclear if full system identification is even necessary
for attack monitoring.
These considerations have contributed to the increasing popularity in
recent years of data-driven approaches to monitoring.
However, despite the proliferation of data-driven methods for
security, a detailed characterization of their limitations  
is lacking, especially when compared to model-based methods.
This letter fills such a gap for the monitoring 
of linear~systems.

We now provide an overview of the data-driven attack monitoring
problem studied in this letter.  We consider the following
discrete-time linear time-invariant system:
\begin{align}
  \begin{aligned}
    x(k+1) &= A x(k) + B u(k), \\
    y(k) &= C x(k) + D u(k),
  \end{aligned}
           \label{eq:LTI_sys}
\end{align}
with $A \in \real^{n \times n}$, $B \in \real^{n \times m}$,
$C \in \real^{p \times n}$ and $D \in \real^{p \times m}$ the system
matrices, $x: \mathbb{N} \rightarrow \real^n$ the state,
$u: \mathbb{N} \rightarrow \real^m$ the (attack) control input, and
$y: \mathbb{N} \rightarrow \real^p$ the output of the system.
Further, let $u = \left(u_1, \ldots, u_m \right)$, where
$u_j : \mathbb{N} \rightarrow \real$ is the input to actuator
$j \in \lbrace 1, \ldots, m \rbrace$ and
$y = \left( y_1, \ldots, y_p \right)$, where
$y_i : \mathbb{N} \rightarrow \real$ is the output from sensor
$i \in \lbrace 1, \ldots, p \rbrace$.

The attacks are modeled in the form of data injection to 
the system~\eqref{eq:LTI_sys}, which includes a large class
of attacks~\cite{FP-FD-FB:10y}.
Thus, the system~\eqref{eq:LTI_sys} is said to be under attack if 
$u \not\equiv 0$.
The data-driven attack detection problem reads as follows:
\begin{problem}{\bf \emph{ (Data-driven attack
      detection)}}\label{prob:attack_detect}
  Given the output
  $\left \lbrace y(k) \right \rbrace_{k \in \mathbb{N}}$, determine if
  the (attack) input $u \not\equiv 0$.~\oprocend
\end{problem}
\smallskip
An algorithm to solve Problem \ref{prob:attack_detect} is called
\textit{attack monitor}~\cite{FP-FD-FB:10y}. In Problem
\ref{prob:attack_detect}, the matrices $A, B, C, D$ of
system~\eqref{eq:LTI_sys}, the state
$\left \lbrace x(k) \right \rbrace_{k \in \mathbb{N}}$, and the
(attack) input $\left \lbrace u(k) \right \rbrace_{k \in \mathbb{N}}$
are unknown. Further, the Attack Detection
Problem~\ref{prob:attack_detect} is one of binary classification of
time-series of measurements $\lbrace y(k) \rbrace_{k \in \mathbb{N}}$
into the classes
$\lbrace \mathrm{Attack},~ \mathrm{No-Attack} \rbrace$, where
$u \not\equiv 0$ corresponds to the class $\mathrm{Attack}$ and
$u \equiv 0$ to $\mathrm{No-Attack}$. Therefore, data-driven attack
detection is essentially an inverse problem of determining the
structure of the inputs to system~\eqref{eq:LTI_sys} based on the
output measurements. In this letter, we characterize the fundamental
limitations of data-driven attack monitoring when the underlying
system is linear and time-invariant.


\smallskip
\noindent
\textbf{Related work.}  Approaches to data-driven monitoring
fundamentally rely on the fact that the underlying
system~\eqref{eq:LTI_sys} imposes structure on the temporal
characteristics of the output data streams.  The task of a data-driven
attack monitor is then one of detecting changes in
the structure of the output data streams in the absence of a model of
the underlying system that generates them.
Viewed this way, the task of data-driven attack monitoring
is closely related to outlier detection
in sensor data streams~\cite{SP-JS-CF:05, DS-ZG-KHJ-LS:17}.
Recent works have proposed machine learning methods
for data-driven attack monitoring of cyber-physical systems,
which broadly fall within the categories of supervised~\cite{RH-JMB-MAB-TM-UA-SP:14} 
and unsupervised learning~\cite{MK-AS:18, JI-YY-YC-CMP-JS:17,
JG-SA-MT-ZSL:17}.
While these works demonstrate performance of varying degree
in implementation, a theoretical characterization of performance
of these approaches remains elusive.

The performance of data-driven methods relies heavily on the quality
of the available data.  Therefore, an investigation into the
fundamental limitations of data-driven methods must relate notions of
performance to notions of data informativity. In the context of system
identification, persistency of excitation~\cite{JCW-PR-IM-BLMDM:05} is
often assumed as a precondition on the data.  In
\cite{HJVW-JE-HLT-MKC:20}, the authors propose the notion of data
informativity for data-driven control \cite{CDP-PT:19,GB-VK-FP:19},
where they characterize the necessity of persistency of excitation for
data-driven analysis and control and show that for certain problems,
persistency of excitation is not necessary.  In a similar vein, in
this letter we obtain conditions on the information content in the
measurement data in the context of data-driven monitoring.

Finally, in~\cite{FP-FD-FB:10y} the authors characterize undetectable
attacks for model-based attack monitoring. Other works have studied
the problem of false-data injection~\cite{YL-PN-MKR:11, YM-BS:10},
where the objective is to inject inputs that will remain undetected by
an attack monitor. Here, we provide an equivalent characterization of
undetectable attacks in a data-driven setting

\smallskip
\noindent
\textbf{Paper contributions.} 
This letter contributes a characterization of the fundamental limitations of 
data-driven attack monitoring, from a systems-theoretic perspective.
The particular contributions and the outline we follow 
are detailed below:
(i) We first briefly treat the problem of attack detection in the
model-based setting and develop the framework and preliminary results
that we then adapt to the data-driven setting.
(ii) We propose the notion of Hankel information
to characterize the information content in the output
time series. We then obtain a systems-theoretic bound 
on the information content of the output time series 
generated by a given system and the time taken to attain
this bound. 
(iii) We propose a data-driven attack monitoring 
scheme that relies on learning the dynamics 
governing the features in the output data.
(iv) We provide a practical data-driven heuristic 
for handling the output data for use in the attack monitor, 
and characterize the length of the 
time horizon for data collection to reach detection
capability under the heuristic. 
(v) We finally characterize attacks undetectable by
data-driven monitors. 
%
%
%
\section{Data-driven attack detection}
In this section, we address the data-driven attack detection
problem.
The output of system~\eqref{eq:LTI_sys} over a time window
$\lbrace 0, 1, \ldots, N-1 \rbrace$, for any $N \in \mathbb{N}$, can
be expressed as
\begin{align}
	\mathbf{y}_{0:N-1} = \left[ \begin{array}{@{}c|c@{}}
	\mathcal{O}_N & \mathcal{C}_N  \end{array} \right] 
	\left( \begin{matrix}  x(0) \\ \mathbf{u}_{0:N-1}  \end{matrix} \right),
	\label{eq:output_O:N-1}
\end{align}
where $\mathbf{y}_{r:s} = (y(r), y(r+1), \ldots, y(s)) \in \real^{p(s-r+1)}$ and
$\mathbf{u}_{r:s} = (u(r), u(r+1), \ldots, u(s)) \in \real^{m(s-r+1)}$,
for any $r,s \in \mathbb{N}$ such that $r \leq s$.
The matrices $\mathcal{O}_N$ and $\mathcal{C}_N$ read as: \linebreak
\resizebox{.98\linewidth}{!}{
  \begin{minipage}{\linewidth}
\begin{align*}
\mathcal{O}_N = \left[ \begin{matrix}
			C \\ CA \\ CA^2 \\ \vdots \\ CA^{N-1} 
		\end{matrix} \right],~~
\mathcal{C}_N = \left[ \begin{matrix}
			D & 0 & \ldots & 0 \\
			CB & D & \ldots & 0 \\
			CAB & CB & \ldots & 0 \\
			\vdots & \vdots & \ddots & \vdots \\
			CA^{N-2}B & CA^{N-3}B & \ldots & D
						\end{matrix} \right]	.
\end{align*}
\end{minipage}
}
%
%
\linebreak 

\noindent
Notice from~\eqref{eq:output_O:N-1} that the outputs over time-windows
of size $N$ belong to the column space of
$\left[ \begin{array}{@{}c|c@{}} \mathcal{O}_N &
    \mathcal{C}_N \end{array} \right]$.

\subsection{Feature space dynamics and undetectable attacks}
In the ensuing analysis, we consider the nominal system under no
attack (i.e., $u \equiv 0$).  We now factor $\mathcal{O}_N$ by
singular value decomposition \cite{RAH-CRJ:85} and construct
$S^{(\mathrm{m}, \mathrm{nom})}$ from the left singular vectors of
$\mathcal{O}_N$ corresponding to the non-zero singular values,
(where~$\mathrm{m}$ in the superscript denotes that it is obtained
from the model).  Notice that
$\mathrm{Col} \left( S^{(\mathrm{m}, \mathrm{nom})} \right) =
\mathrm{Col} \left( \mathcal{O}_N \right)$, and let
$\mathrm{Col} \left( S^{(\mathrm{m}, \mathrm{nom})} \right)$ be the
\textit{feature space} of the nominal outputs of \eqref{eq:LTI_sys}
over time windows of size~$N$.  Since the columns of
$S^{(\mathrm{m}, \mathrm{nom})}$ are orthonormal, they form a basis of
the nominal feature space.  Then, the nominal output over a time
window $\lbrace k, \ldots, k+N-1 \rbrace$ is
$\mathbf{y}_{k:k+N-1} = \mathcal{O}_N x(k)$, and it can be expressed
in the feature space coordinates as
\begin{align*}
  \mathbf{w}(k) = {S^{(\mathrm{m}, \mathrm{nom})}}^{\top}
  \mathbf{y}_{k:k+N-1} = {S^{(\mathrm{m}, \mathrm{nom})}}^{\top}
  \mathcal{O}_N x(k) .
\end{align*}
We call $\lbrace \mathbf{w}(k) \rbrace_{k \in \mathbb{N}}$ the nominal
\emph{feature vector sequence}. We note that
$\lbrace \mathbf{w}(k) \rbrace_{k \in \mathbb{N}}$ is the
representation of the nominal outputs over windows of size~$N$
of~\eqref{eq:LTI_sys} in the feature space coordinates. It can be
shown, for sufficiently large~$N$, that the nominal feature vector
sequence is generated by the system:

\small
\begin{align}\label{eq:model-based_feature_space_dyn}
  \mathbf{w}(k+1) = \left( {S^{(\mathrm{m}, \mathrm{nom})}}^{\top} \mathcal{O}_N \right) A 
  {\left( {S^{(\mathrm{m}, \mathrm{nom})}}^{\top} \mathcal{O}_N
  \right)}^{\dagger} \mathbf{w}(k), 
\end{align}
\normalsize
where $\mathbf{w}(0) = \mathbf{w}_0 = {S^{(\mathrm{m},
    \mathrm{nom})}}^{\top} \mathcal{O}_N x(0)$.  This forms the
content of the Theorem~\ref{thm:well-posed_feature_dyn} below,
which is proved in
Appendix~\ref{app:proof_prop_feature_space_dyn}. Before stating the
theorem, we recall the notion of observability index of linear
time-invariant systems, which provides a lower bound on the size~$N$
of the time~window. The observability index of the system
\eqref{eq:LTI_sys} is defined as:
  \begin{align*}
    \nu = \min \left \lbrace N \in \mathbb{N} \left| ~\mathrm{Rank}
    \left( \mathcal{O}_{N} \right) = \mathrm{Rank} \left(
    \mathcal{O}_{N + j}\right), ~\forall j \in \mathbb{N}
    \right. \right \rbrace .
  \end{align*}



  \begin{theorem}{\bf \emph{(Feature space
        dynamics)}}\label{thm:well-posed_feature_dyn}
    Let $\nu$ be the observability index of~\eqref{eq:LTI_sys}. For
    any $N \ge \nu$, the feature vector
    sequence~$\lbrace \mathbf{w}(k) \rbrace_{k \in \mathbb{N}}$
    generated by the system~\eqref{eq:LTI_sys} with $u \equiv 0$ is a
    solution to~\eqref{eq:model-based_feature_space_dyn}. \oprocend
\end{theorem}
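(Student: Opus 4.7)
Under $u \equiv 0$, the output window satisfies $\mathbf{y}_{k:k+N-1} = \mathcal{O}_N x(k)$, and setting $M := {S^{(\mathrm{m},\mathrm{nom})}}^{\top} \mathcal{O}_N$, the feature vector becomes $\mathbf{w}(k) = M x(k)$. Since $x(k+1) = A x(k)$, I immediately get $\mathbf{w}(k+1) = M A x(k)$. The claim \eqref{eq:model-based_feature_space_dyn} is therefore equivalent to showing
\begin{align*}
  M A x(k) \;=\; M A M^{\dagger} M x(k)\quad \text{for all } k,
\end{align*}
which reduces to $M A (I - M^{\dagger} M) x(k) = 0$. Since $I - M^{\dagger} M$ is the orthogonal projector onto $\ker(M)$, it suffices to prove $\ker(M) \subseteq \ker(MA)$, or equivalently $A\,\ker(M) \subseteq \ker(M)$.

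Next I would reduce the kernel of $M$ to the kernel of $\mathcal{O}_N$. Because the columns of $S^{(\mathrm{m},\mathrm{nom})}$ are orthonormal and span $\operatorname{Col}(\mathcal{O}_N)$, the map ${S^{(\mathrm{m},\mathrm{nom})}}^{\top}$ restricted to $\operatorname{Col}(\mathcal{O}_N)$ is injective (indeed, it is an isometry onto its image). Consequently $M v = 0$ iff $\mathcal{O}_N v = 0$, so $\ker(M) = \ker(\mathcal{O}_N)$, and similarly $\ker(MA) = \ker(\mathcal{O}_N A)$. The task therefore collapses to the purely structural statement that $\ker(\mathcal{O}_N)$ is $A$-invariant whenever $N \ge \nu$.

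For that step I would invoke the defining property of the observability index: $N \ge \nu$ implies $\operatorname{Rank}(\mathcal{O}_N) = \operatorname{Rank}(\mathcal{O}_{N+1})$, hence $\ker(\mathcal{O}_N) = \ker(\mathcal{O}_{N+1})$. Now pick any $v \in \ker(\mathcal{O}_N)$; I must check $A v \in \ker(\mathcal{O}_N)$, i.e., $CA^{j+1} v = 0$ for $j = 0, \ldots, N-1$. The conditions for $j = 0, \ldots, N-2$ are direct restatements of $\mathcal{O}_N v = 0$, while the case $j = N-1$ asks $CA^N v = 0$, which is exactly the additional row of $\mathcal{O}_{N+1}$ whose vanishing on $\ker(\mathcal{O}_N)$ is guaranteed by $\ker(\mathcal{O}_N) = \ker(\mathcal{O}_{N+1})$. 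Finally I would verify the initial condition $\mathbf{w}(0) = M x(0)$, which is immediate from the definition of $\mathbf{w}$.

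The main obstacle is the $A$-invariance argument, but once the kernel identifications $\ker(M) = \ker(\mathcal{O}_N)$ and $\ker(MA) = \ker(\mathcal{O}_N A)$ are in place, the invariance reduces cleanly to the observability-index rank condition, so the proof should be short.
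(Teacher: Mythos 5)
Your proposal is correct and follows essentially the same route as the paper's proof in Appendix~\ref{app:proof_prop_feature_space_dyn}: your projector identity $I - M^{\dagger}M$ acting on $x(k)$ is exactly the paper's decomposition $x(k)=\alpha(k)+\beta(k)$ with $\alpha(k)\in\mathrm{Ker}(M)$ and $\beta(k)=M^{\dagger}\mathbf{w}(k)$, and both arguments hinge on the same two facts, namely $\mathrm{Ker}({S^{(\mathrm{m},\mathrm{nom})}}^{\top}\mathcal{O}_N)=\mathrm{Ker}(\mathcal{O}_N)$ and the $A$-invariance of $\mathrm{Ker}(\mathcal{O}_N)$ for $N\ge\nu$ obtained from $\mathrm{Ker}(\mathcal{O}_N)=\mathrm{Ker}(\mathcal{O}_{N+1})$. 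No gaps.
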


Theorem~\ref{thm:well-posed_feature_dyn} suggests the design of
an attack detection scheme that is based on verifying if the output
time series $\lbrace y(k) \rbrace_{k \in \mathbb{N}}$, generated by
the system~\eqref{eq:LTI_sys}, can be completely characterized by a
feature vector sequence
$\lbrace \mathbf{w}(k) \rbrace_{k \in \mathbb{N}}$ that is a solution
to~\eqref{eq:model-based_feature_space_dyn}. In particular, if
$\mathbf{y}_{k:k+N-1} = {S^{(\mathrm{m}, \mathrm{nom})}}
\mathbf{w}(k)$ for all $k \in \mathbb{N}$, where
$\lbrace \mathbf{w}(k) \rbrace_{k \in \mathbb{N}}$ is a solution
to~\eqref{eq:model-based_feature_space_dyn}, then
$\lbrace y(k) \rbrace_{k \in \mathbb{N}}$ is classified as
$\mathrm{No-Attack}$. The output sequence is otherwise classified as
$\mathrm{Attack}$. We also note that maximum detection capability is
attained at time $T_\text{safe}^m = \nu$.

\smallskip
\begin{remark}{\bf \emph{(Undetectable attacks)}}
\label{remark:undetectable_model-based} 
  We first note that for time instants $T \leq \nu$, an (attack) input
  $\mathbf{u}_{0:T-1}$ over the time horizon
  $\lbrace 0, \ldots, T-1 \rbrace$ is detectable if and only if
  $\mathcal{C}_{T} \mathbf{u}_{0:T-1} \notin \mathrm{Col} \left(
    \mathcal{O}_{T} \right)$.  Therefore, any (attack) input
  $\mathbf{u}_{0:T-1}$ for which
  $\mathcal{C}_{T} \mathbf{u}_{0:T-1} \in \mathrm{Col} \left(
    \mathcal{O}_{T} \right)$ is undetectable.  However, for attacks
  starting at a time instant~$T \geq \nu +1$ (i.e.,
  $u: \mathbb{N} \rightarrow \real^m$ such that
  $\mathbf{u}_{0:T-1} = 0$), not only must the outputs
  $\mathbf{y}_{k:k+N-1} \in \mathrm{Col} \left( \mathcal{O}_N
  \right)$, they must also, for~$N \geq \nu + 1$, be completely
  characterized by a feature vector sequence
  $\lbrace \mathbf{w}(k) \rbrace_{k \in \mathbb{N}}$ that is a
  solution to~\eqref{eq:model-based_feature_space_dyn}.  Thus, an
  attack starting at $T \geq \nu +1$ is undetectable if and only if
  $\mathbf{u}_{T:T+N-1} \in \mathrm{Ker} \left( \mathcal{C}_N \right)$
  for any $N \in \mathbb{N}$. 
  
  Furthermore, if $u:\mathbb{N} \rightarrow \real^m$ is a finite duration
  attack (i.e., there exists a $d \in \mathbb{N}$ such that
  $\mathbf{u}_{T+d:\infty} = 0$), it can remain undetectable (i.e., 
  $\mathbf{u}_{T:T+N-1} \in \mathrm{Ker} \left( \mathcal{C}_N \right)$
  for any $N \in \mathbb{N}$)
  only if the system~\eqref{eq:LTI_sys} is not left invertible~\cite{MS-JM:69}.
  
  This analysis is compatible with the fundamental limitations derived for
  model-based attack detection in \cite{FP-FD-FB:10y} in terms of the zero dynamics of
  \eqref{eq:LTI_sys}.  \oprocend
\end{remark}

\subsection{Information bound on output time series}
The preceding analysis assumes the knowledge of the system matrix~$A$
and the observability matrix~$\mathcal{O}_N$ of
system~\eqref{eq:LTI_sys}, which are unknown in the data-driven
setting.  We instead have access only to a finite time series of
outputs over a horizon $\lbrace 0, \ldots, T-1 \rbrace$, from which we
can construct the following Hankel matrix with a time window of size
$N \leq T$:
\begin{align}\label{eq:Hankel_matrix}
  Y_{N,T} = \left[ \begin{matrix} y(0) & y(1) & \ldots & y(T-N) \\ y(1) & y(2) & \ldots & y(T-N+1) \\ y(2) & y(3) & &  \\  \vdots & \vdots & \ddots & \vdots \\
      y(N-1) & y(N) & \ldots & y(T-1) \end{matrix} \right].
\end{align}
As a first step, we characterize the information bound on the output
time series for the attack detection problem via an upper bound on the rank of
$Y_{N,T}$. Notice that $\mathrm{Rank} \left( Y_{N,T} \right)$ is a
function of $N, T \in \mathbb{N}$ and the initial state $x(0)$ (since
$\lbrace y(k) \rbrace_{k \in \mathbb{N}}$ is generated by the
underlying system~\eqref{eq:LTI_sys}). We introduce the notion
of Hankel information defined below:
\begin{align}\label{eq:Hankel_info_bound}
  \Gamma\left( \lbrace y(k) \rbrace_{k \in \mathbb{N}} \right) = \sup_{N, T
  \in \mathbb{N}} \setdef{\mathrm{Rank}  \left( Y_{N,T} \right)}{ N \leq T},
\end{align}
as the measure of the information content in the output data
$\lbrace y(k) \rbrace_{k \in \mathbb{N}}$.  In fact, we have
$Y_{k,N,T} = \mathcal{O}_N \left[ \begin{matrix} x(k) & Ax(k) & \ldots
    & A^{T-N} x(k) \end{matrix} \right]$, from which it follows that
$\mathrm{Rank} \left( Y_{N,T} \right) \leq \mathrm{Rank} \left(
  \mathcal{O}_N \right)$.  When
$\mathrm{Rank} \left( Y_{N,T} \right) = \mathrm{Rank} \left(
  \mathcal{O}_N \right)$, we can completely reconstruct the column
space of $\mathcal{O}_N$ from
$\lbrace y(k) \rbrace_{k \in \mathbb{N}}$. This corresponds to the
full information scenario. However, since this is not generally the
case, we provide a bound in Proposition~\ref{prop:information_content} on
the Hankel information of the output data
$\lbrace y(k) \rbrace_{k \in \mathbb{N}}$ generated from an initial
state~$x(0)$. Before
stating the proposition, we first introduce the notion of excitability index
for the system~\ref{eq:LTI_sys}.

\smallskip
\begin{definition}{\bf \emph{(Excitability index)}}
  Let $x \in \real^n$ and $N \in \mathbb{N}$, and let
  $E_N(x) = \left[ \begin{matrix} x & Ax & \ldots & A^{N -
        1}x \end{matrix} \right]$. The excitability index $\mu(x)$
  of~\eqref{eq:LTI_sys} at $x$ is defined as
  \begin{align*}
    \mu(x) = \min \setdef{i \!\!}{\!\!\mathrm{Rank}
    \left( E_i(x) \right) = \mathrm{Rank} \left( E_{i+j}(x) \right),
    \forall j \in \mathbb{N} }.
  \end{align*}
  The excitability index of~\eqref{eq:LTI_sys} is
  $\mu = \max_{x \in \real^n} \mu(x)$. \oprocend
\end{definition}

\smallskip
\begin{proposition}{\bf \emph{(Hankel information bound on output
time series)}}\label{prop:information_content}
  Let $\lbrace y(k) \rbrace_{k \in \mathbb{N}}$ be output of
  \eqref{eq:LTI_sys} from an initial state~$x(0)$. Then, for any
  $N,T \in \mathbb{N}$ with $N \leq T$, we have:
\begin{enumerate}    
 \item {$\mathrm{Rank} \left( Y_{N,T} \right) \leq \min_{\mc V \subseteq \mathrm{Ker}^{\perp} \left(\mathcal{O}_N \right)}
    \setdef{\mathrm{dim}(\mc V)}{x(0) \in \mc V; A \mc V \subseteq \mc V}$},
\item $\Gamma\left( \lbrace y(k) \rbrace_{k \in \mathbb{N}} \right) \leq \min \left \lbrace \nu
    p , \mu \right \rbrace$,
\end{enumerate}
  where $Y_{N,T}$ and~$\Gamma\left( \lbrace y(k) \rbrace_{k \in \mathbb{N}} \right)$
  are as defined in~\eqref{eq:Hankel_matrix} and~\eqref{eq:Hankel_info_bound}.
\end{proposition}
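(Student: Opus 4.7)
My approach starts from a single factorization of the Hankel matrix that drives both parts. Since the proposition concerns the nominal ($u\equiv 0$) output of~\eqref{eq:LTI_sys}, we have $y(k)=CA^k x(0)$, and inspecting block-by-block gives
\[
Y_{N,T} \;=\; \mathcal{O}_N\, E_{T-N+1}(x(0)).
\]
Both inequalities then reduce to controlling the ranks of the two factors of this product.

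\emph{Part (i).} I would fix any admissible $\mc V$, i.e.\ any $A$-invariant subspace with $x(0)\in\mc V$ and $\mc V\subseteq \mathrm{Ker}^{\perp}(\mathcal{O}_N)$. $A$-invariance together with $x(0)\in\mc V$ forces $A^{j}x(0)\in\mc V$ for every $j\ge 0$, so every column of $E_{T-N+1}(x(0))$ lies in $\mc V$ and consequently every column of $Y_{N,T}$ lies in $\mathcal{O}_N(\mc V)$. The containment $\mc V\subseteq\mathrm{Ker}^{\perp}(\mathcal{O}_N)$ implies $\mc V\cap\mathrm{Ker}(\mathcal{O}_N)=\{0\}$, so $\mathcal{O}_N$ restricts to an injective linear map on $\mc V$ and $\dim\mathcal{O}_N(\mc V)=\dim\mc V$. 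Hence $\mathrm{Rank}(Y_{N,T})\le\dim\mc V$, and minimizing over admissible $\mc V$ yields the bound (with the usual convention that the minimum is $+\infty$ when no admissible $\mc V$ exists, in which case the inequality is vacuous).

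\emph{Part (ii).} From the factorization I would use the elementary inequality $\mathrm{Rank}(Y_{N,T})\le\min\{\mathrm{Rank}(\mathcal{O}_N),\,\mathrm{Rank}(E_{T-N+1}(x(0)))\}$. For the first factor, $\mathrm{Rank}(\mathcal{O}_N)\le Np\le \nu p$ when $N\le\nu$, while for $N\ge\nu$ the observability-index definition gives $\mathrm{Rank}(\mathcal{O}_N)=\mathrm{Rank}(\mathcal{O}_\nu)\le\nu p$. For the second factor, the standard Krylov-subspace argument shows that $\mathrm{Rank}(E_i(x(0)))$ grows by exactly one at each step until it stabilizes at the excitability index $\mu(x(0))$, so $\mathrm{Rank}(E_k(x(0)))\le\mu(x(0))\le\mu$ for every $k$. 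Taking the supremum over $N\le T$ then produces $\Gamma\le\min\{\nu p,\mu\}$.

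\emph{Anticipated obstacle.} The argument reduces to linear algebra on the two factors, so no deep obstruction is expected. The one place requiring care is the injectivity step in~(i): if the hypothesis $\mc V\subseteq\mathrm{Ker}^{\perp}(\mathcal{O}_N)$ were weakened to mere $A$-invariance, the identity $\dim\mathcal{O}_N(\mc V)=\dim\mc V$ would break down and the bound would have to be corrected by $\dim(\mc V\cap\mathrm{Ker}(\mathcal{O}_N))$. Respecting this containment, and remembering the strictly-by-one growth of Krylov ranks so that the stabilized rank coincides with the excitability index, are the two small structural points to keep in view.
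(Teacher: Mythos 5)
Your proof is correct and follows essentially the same route as the paper's: both rest on the factorization $Y_{N,T}=\mathcal{O}_N E_{T-N+1}(x(0))$ and bound the rank through the two factors, using $A$-invariance of $\mathcal{V}$ for part~(i) and the observability/excitability indices for part~(ii). The only cosmetic differences are that in part~(ii) you use the plain rank-of-a-product inequality where the paper passes through an intersection-of-subspaces identity, and in part~(i) you add the (harmless) injectivity observation and the vacuous-case convention for an empty feasible set; neither changes the substance.
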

\smallskip
\begin{proof}
  We start by noticing that
  $Y_{N,T} = \mathcal{O}_N \left[ \begin{matrix} x(0) & Ax(0) & \ldots
      & A^{T-N} x(0) \end{matrix} \right]$, from which it follows that
  $\mathrm{Rank} \left( Y_{N,T} \right) \leq \mathrm{Rank} \left(
    \left[ \begin{matrix} x(0) & Ax(0) & \ldots & A^{T-N}
        x(0) \end{matrix} \right] \right)$.  Let $\mc V$ be $A$-invariant,
  and let
  $x(0) \in \mc V \subseteq \mathrm{Ker}^{\perp} \left( \mathcal{O}_N
  \right)$. Then,
  $\mathrm{dim} \left( \left \lbrace x(0), Ax(0), \ldots, A^{T-N} x(0)
    \right \rbrace \right) \leq \mathrm{dim} \left( \mc V \right)$.  It
  then follows that
  $\mathrm{Rank} \left( Y_{N,T} \right) \leq \mathrm{Rank} \left(
    \left[ \begin{matrix} x(0) & Ax(0) & \ldots & A^{T-N}
        x(0) \end{matrix} \right] \right) \leq \mathrm{dim}(\mc V)$. Since
  the above inequality holds for any $A$-invariant
  $\mc V \subseteq \mathrm{Ker}^{\perp} \left( \mathcal{O}_N \right)$, the
  claim follows.

Moreover, we have $\mathrm{Rank} \left(  \mathcal{O}_N \right) \leq \mathrm{Rank} \left(  \mathcal{O}_{\nu} \right)$
and $\mathrm{Rank} \left( \left[ \begin{matrix}	x(0) & Ax(0) & \ldots & A^{T-N}	x(0) \end{matrix} \right] \right) \leq 
\mathrm{Rank} \left( \left[ \begin{matrix}	x(0) & Ax(0) & \ldots & A^{\mu - 1}	x(0) \end{matrix} \right] \right)$
where $\nu$ and $\mu$ are the observability and excitability indices respectively.
Moreover, we have $\mathrm{Rank} \left( Y_{N,T} \right) =
\mathrm{dim} \left( \mathrm{Ker}^{\perp}\left( \mathcal{O}_N \right) \cap \mathrm{Col} \left( 
\left[ \begin{matrix}	x(0) & Ax(0) & \ldots & A^{T-N}	x(0) \end{matrix} \right] \right) \right)$.
Since $\mathrm{dim} \left( \mathrm{Ker}^{\perp}\left( \mathcal{O}_N \right) \right) =  
\mathrm{dim} \left( \mathrm{Col} \left( \mathcal{O}_N \right) \right)\leq \min \lbrace \nu p, n \rbrace$,
$\mathrm{dim} \left( \mathrm{Col} \left( \left[ \begin{matrix}	x(0) & Ax(0) & \ldots & A^{T-N}	x(0) \end{matrix} \right] \right) \right) 
\leq \mu$ for any $N, T \in \mathbb{N}$ and $x(0) \in \real^n$, and 
we have $\mu \leq n$, 
we get $\sup \setdef{\mathrm{Rank}  \left( Y_{N,T} \right) \!\!}{ \!\! N, T
    \in \mathbb{N} \text{ and } N \leq T} 
\leq \min \left \lbrace \nu p , \mu \right \rbrace$.
\end{proof}

Proposition~\ref{prop:information_content} implies that, if the initial
state~$x(0)$ belongs to an $A$-invariant subspace of
$\mathrm{Ker}^{\perp} \left( \mathcal{O}_N \right)$ of lower
dimension, then the column space of $\mathcal{O}_N$ cannot be
completely characterized from the Hankel matrix~$Y_{N,T}$, for any
$T \in \mathbb{N}$.
Proposition~\ref{prop:information_content} therefore characterizes an
information bound on $\lbrace y(k) \rbrace_{k \in \mathbb{N}}$. We
now characterize the length of the shortest time horizon over which
$\lbrace y(k) \rbrace_{k \in \mathbb{N}}$ attains 
the Hankel information bound.

\smallskip
\begin{theorem}{\bf \emph{(Minimum horizon length)}}\label{thm:minimum_hor_length}
  Let $\lbrace y(k) \rbrace_{k \in \mathbb{N}}$ be the outputs of
  \eqref{eq:LTI_sys} from the initial state~$x(0)$. Then, for any
  $x(0) \in \real^n$ and $T_0 \geq \nu + \mu - 1$, there exists
  $N_0 \in \mathbb{N}$ with $N_0 \leq T_0$ such that:
\begin{align*}
  \mathrm{Rank} \left( Y_{N_0,T_0} \right) = 
   \Gamma\left( \lbrace y(k) \rbrace_{k \in \mathbb{N}} \right),
\end{align*} 
where $Y_{N,T}$ and~$\Gamma\left( \lbrace y(k) \rbrace_{k \in \mathbb{N}} \right)$
  are as defined in~\eqref{eq:Hankel_matrix} and~\eqref{eq:Hankel_info_bound}.
\end{theorem}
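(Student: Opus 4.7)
My plan is to exploit the factorization $Y_{N,T} = \mathcal{O}_N E_{T-N+1}(x(0))$ used in the proof of Proposition~\ref{prop:information_content}, and to identify two invariant subspaces that control the two factors simultaneously. I would first observe that, by definition of $\nu$, the kernel $\Ker(\mathcal{O}_N)$ is non-increasing in $N$ and stabilizes to the unobservable subspace $\mc K = \Ker(\mathcal{O}_\nu)$ for every $N \ge \nu$; analogously, by definition of $\mu$, the column space $\mathrm{Col}(E_j(x(0)))$ is non-decreasing in $j$ and stabilizes to the smallest $A$-invariant subspace $\mc V$ containing $x(0)$ for every $j \ge \mu(x(0))$, hence in particular for every $j \ge \mu$.

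Next, I would make the explicit choice $N_0 = \nu$, which is admissible because $T_0 \ge \nu + \mu - 1 \ge \nu$ and which forces $T_0 - N_0 + 1 \ge \mu$. Then both factors of $Y_{N_0,T_0} = \mathcal{O}_\nu E_{T_0 - \nu + 1}(x(0))$ have reached their respective saturation, and the rank-nullity identity applied to $\mathcal{O}_\nu$ restricted to $\mc V$ yields $\Rank(Y_{N_0,T_0}) = \dim \mc V - \dim(\mc V \cap \mc K)$.

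The final step is to verify that this number is an upper bound on $\Rank(Y_{N,T})$ for every admissible $(N,T)$, so that it matches the supremum $\Gamma$. For arbitrary $N \le T$ one has $\mathrm{Col}(E_{T-N+1}(x(0))) \subseteq \mc V$ by the $A$-invariance of $\mc V$, and $\Ker(\mathcal{O}_N) \supseteq \mc K$ by monotonicity of the observability kernel. Writing $\Rank(Y_{N,T}) = \dim(\mathcal{O}_N \mathrm{Col}(E_{T-N+1}(x(0)))) \le \dim(\mathcal{O}_N \mc V) \le \dim(\mathcal{O}_\nu \mc V) = \Rank(Y_{N_0,T_0})$ then closes the argument via routine rank-nullity bookkeeping.

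The main obstacle I anticipate is conceptual rather than computational: the supremum defining $\Gamma$ ranges over an infinite index set, so one must argue it is actually attained at a finite pair $(N_0, T_0)$. Once the structural picture in terms of the invariant subspaces $\mc V$ and $\mc K$ is fixed, the two separate filtrations (by $N$ on the observability side and by $j$ on the state-orbit side) saturate at $\nu$ and $\mu$ respectively, and the hypothesis $T_0 \ge \nu + \mu - 1$ is exactly what is required to make both saturations happen simultaneously at the same pair $(N_0, T_0)$.
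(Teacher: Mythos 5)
Your proposal is correct and follows essentially the same route as the paper: factor $Y_{N,T} = \mathcal{O}_N\,E_{T-N+1}(x(0))$, observe that $\Ker(\mathcal{O}_N)$ saturates at $N=\nu$ and $\mathrm{Col}\left(E_j(x(0))\right)$ saturates at $j=\mu$ to the smallest $A$-invariant subspace containing $x(0)$, and take $N_0=\nu$ so that $T_0\ge\nu+\mu-1$ leaves room for $\mu$ columns. If anything, your bookkeeping is the more careful one: you use the correct rank--nullity count $\Rank(Y_{N,T})=\dim\mathrm{Col}(E)-\dim\left(\mathrm{Col}(E)\cap\Ker(\mathcal{O}_N)\right)$ together with a two-step monotonicity bound through $\dim(\mathcal{O}_N\mc V)$, whereas the paper writes the rank as $\dim\left(\Ker^{\perp}(\mathcal{O}_N)\cap\mathrm{Col}(E)\right)$, an identity that does not hold for arbitrary pairs of subspaces even though the intended conclusion survives.
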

\begin{proof}
For the Hankel matrix~$Y_{N,T}$,
we have $r = Np$ rows and let~$c$ be the number of columns.
Then by construction, we have 
$T = N + c - 1$.
Thus, we have
$Y_{N,T} = \mathcal{O}_N  \left[ \begin{matrix}	x(0) & Ax(0) & \ldots & A^{c-1}	x(0) \end{matrix} \right]$,
and 
$\mathrm{Rank} \left( Y_{N,T} \right) =
\mathrm{dim} \left( \mathrm{Ker}^{\perp}\left( \mathcal{O}_N \right) \cap \mathrm{Col} \left( 
\left[ \begin{matrix}	x(0) & Ax(0) & \ldots & A^{c-1}	x(0) \end{matrix} \right] \right) \right)$.
We also have $\mathrm{Ker}^{\perp}\left( \mathcal{O}_N \right) \subseteq 
\mathrm{Ker}^{\perp}\left( \mathcal{O}_{\nu} \right)$
and $\mathrm{Col} \left( \left[ \begin{matrix}	x(0) & Ax(0) & \ldots & A^{c-1}	x(0) \end{matrix} \right] \right)
\subseteq \mathrm{Col} \left( \left[ \begin{matrix}	x(0) & Ax(0) & \ldots & A^{\mu-1}	x(0) \end{matrix} \right] \right)$
for all $N, c \in \mathbb{N}$.
Therefore, we get $\mathrm{Ker}^{\perp}\left( \mathcal{O}_N \right) \cap \mathrm{Col} \left( 
\left[ \begin{matrix}	x(0) & Ax(0) & \ldots & A^{c-1}	x(0) \end{matrix} \right] \right) 
\subseteq \mathrm{Ker}^{\perp}\left( \mathcal{O}_{\nu} \right) \cap \mathrm{Col} \left( 
\left[ \begin{matrix}	x(0) & Ax(0) & \ldots & A^{\mu-1}	x(0) \end{matrix} \right] \right) $
Thus, for $T_0 \geq \nu + \mu - 1$,
we can choose $N_0 = \nu$ and $c = \mu$
such that $Y_{N_{0},T_0}$ attains maximum rank.
\end{proof}

Theorem~\ref{thm:minimum_hor_length} states that for any initial
state~$x(0)$, we attain the Hankel information (upper) bound in
Proposition~\ref{prop:information_content} for the time series of outputs
$\lbrace y(k) \rbrace_{k \in \mathbb{N}}$ generated by the
system~\eqref{eq:LTI_sys}, within the finite time horizon
$\lbrace 0, \ldots, \nu + \mu -1 \rbrace$.


\subsection{Data-driven attack detection monitor}
\label{sec:data-driven_attack_monitor}
For the nominal system under no attack (i.e., $u \equiv 0$), we now
seek to obtain a data-driven expression for the feature space
dynamics, as in~\eqref{eq:model-based_feature_space_dyn}.
We factor the Hankel matrix $Y_{N,T} \in \real^{pN \times (T-N+1)}$ by
singular value decomposition to obtain a matrix
$S_{N,T} \in \real^{pN \times q}$ whose column vectors are the
(orthonormal) left singular vectors of $Y_{N,T}$ corresponding to the
non-zero singular values. The columns of $S_{N,T}$ can be interpreted
as the \emph{features in the output data}
$\lbrace y(k) \rbrace_{k=0}^{T-1}$ with a time window of size $N$.
In the ensuing analysis, we fix $N$, $T$ and suppress them from the
notation, hereby denoting $Y_{N,T}$ and $S_{N,T}$ simply by $Y$ and
$S$, respectively.

We now define feature vectors $\mathbf{w}(k) = S^{\top} \mathbf{y}_{k:k+N-1}$
by projecting the outputs $\mathbf{y}_{k:k+N-1}$, for $k \in \lbrace 0, \ldots, T-N \rbrace$,
onto the feature space, and construct a matrix $W$ as follows:
\begin{align*}
	W = \left[ \begin{matrix}  \vert & \vert &  & \vert \\ \mathbf{w}(0) & \mathbf{w}(1) & \ldots & \mathbf{w}(T-N) 
	\\ \vert & \vert &  & \vert  \end{matrix}	 \right] .
\end{align*}
We then similarly construct a matrix $\vec{W}$ from output data over a time horizon $\lbrace 1, \ldots, T \rbrace$,
i.e., $\lbrace y(k) \rbrace_{k=1}^T$, to get:
\begin{align*}
	\vec{W} = \left[ \begin{matrix}  \vert & \vert &  & \vert \\ \mathbf{w}(1) & \mathbf{w}(2) & \ldots & \mathbf{w}(T-N+1) 
	\\ \vert & \vert &  & \vert  \end{matrix}	 \right] .
\end{align*}
We show in Theorem~\ref{thm:well-posed_data-driven_feature_dyn} that,
for~$T$ greater than the minimum horizon length in
Theorem~\ref{thm:minimum_hor_length}, the nominal feature vector
sequence is obtained as a solution to
$\mathbf{w}(k+1) = M^* \mathbf{w}(k)$ for all $k \in \mathbb{N}$, where
$M^* \in \real^{q \times q}$ is a solution to the following minimization
problem:
%
\begin{align}
	M^* = \arg \min_{M \in \real^{q \times q}} \| \vec{W} - M W \|_{F},
	\label{eq:feature_space_dyn_minimization}
\end{align} 
where $\| \cdot \|_F$ denotes the Frobenius norm.  In the data-driven
analysis literature, this approach goes by the name of Dynamic Mode
Decomposition~\cite{JHT-CWR-DML-SLB-JNK:14}, and is closely related to
the Eigensystem Realization Algorithm~\cite{JNJ-RSP:85} in the domain
of system identification.
Theorem~\ref{thm:well-posed_data-driven_feature_dyn} characterizes the
conditions under which the feature vector sequence offers a complete
characterization of the nominal output time series.

\smallskip
\begin{theorem}{\bf \emph{(Data-driven feature space
      dynamics)}}\label{thm:well-posed_data-driven_feature_dyn}
  For $N \geq \nu$ and $T \geq N + \mu -1$, the feature vector
  sequence $\lbrace \mathbf{w}(k) \rbrace_{k \in \mathbb{N}}$
  generated by the (nominal) system~\eqref{eq:LTI_sys} with
  $u \equiv 0$ is a solution to $\mathbf{w}(k+1) = M^* \mathbf{w}(k)$
  for all $k \in \mathbb{N}$, where $M^*$ is given
  by~\eqref{eq:feature_space_dyn_minimization}.  Moreover, the global
  minimizer in~\eqref{eq:feature_space_dyn_minimization} is
  $M^* = \vec{W} W^{\dagger}$, and the minimum value is $0$.
\end{theorem}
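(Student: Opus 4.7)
The plan is to express $W$ and $\vec{W}$ in closed form and exploit two $A$-invariance properties guaranteed by the hypotheses $N\geq\nu$ and $T\geq N+\mu-1$. Setting $c=T-N+1$ and $E_c=[\,x(0)\;\;Ax(0)\;\;\cdots\;\;A^{c-1}x(0)\,]$, the nominal Hankel matrix factors as $Y=\mathcal{O}_N E_c$, so $W=S^{\top}\mathcal{O}_N E_c$ and, using $y(k+1)=CAx(k)$ in the nominal regime, $\vec{W}=S^{\top}\mathcal{O}_N A E_c$. I would then record the two invariance facts that do all the work: from $N\geq\nu$ the observability kernel has stabilized, so $\Ker(\mathcal{O}_N)\subseteq\Ker(CA^N)$ and hence $A\,\Ker(\mathcal{O}_N)\subseteq\Ker(\mathcal{O}_N)$; from $T\geq N+\mu-1$ we have $c\geq\mu$, so the definition of the excitability index gives $A\,\mathrm{Col}(E_c)\subseteq\mathrm{Col}(E_c)$ with $x(0)\in\mathrm{Col}(E_c)$.

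Next I would show that the least-squares minimum in \eqref{eq:feature_space_dyn_minimization} is zero by verifying $\mathrm{Col}(\vec{W})\subseteq\mathrm{Col}(W)$. Each column of $\vec{W}$ has the form $S^{\top}\mathcal{O}_N A^{k+1}x(0)$; by $A$-invariance of $\mathrm{Col}(E_c)$ we can write $A^{k+1}x(0)=E_c\alpha$, so the column equals $W\alpha\in\mathrm{Col}(W)$. Thus some $M$ already makes the residual vanish, so the minimum value is $0$. To identify $M^{*}=\vec{W}W^{\dagger}$ as a minimizer, I would use the pseudoinverse identity $\vec{W}W^{\dagger}W=\vec{W}\iff\Ker(W)\subseteq\Ker(\vec{W})$ and verify this inclusion. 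If $Wv=0$, then because the orthonormal columns of $S$ span $\mathrm{Col}(\mathcal{O}_N E_c)=\mathrm{Col}(Y)$, the map $S^{\top}$ is injective on that column space, so $\mathcal{O}_N E_c v=0$, i.e., $E_c v\in\Ker(\mathcal{O}_N)$. The $A$-invariance of $\Ker(\mathcal{O}_N)$ then yields $AE_c v\in\Ker(\mathcal{O}_N)$, so $\vec{W}v=S^{\top}\mathcal{O}_N A E_c v=0$. Combined with the column containment above, $M^{*}W=\vec{W}$ and the minimum value is indeed $0$.

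Finally, I would extend the relation $\mathbf{w}(k+1)=M^{*}\mathbf{w}(k)$ from $k\in\{0,\ldots,T-N\}$ (where it follows column-by-column from $M^{*}W=\vec{W}$) to all $k\in\mathbb{N}$. For arbitrary $k$, $A$-invariance of $\mathrm{Col}(E_c)$ yields $A^{k}x(0)=E_c\alpha$ for some $\alpha$, so $\mathbf{w}(k)=W\alpha$ and $\mathbf{w}(k+1)=\vec{W}\alpha$; hence $M^{*}\mathbf{w}(k)=\vec{W}W^{\dagger}W\alpha=\vec{W}\alpha=\mathbf{w}(k+1)$, where the first equality uses $\Ker(W)\subseteq\Ker(\vec{W})$ applied to $(W^{\dagger}W-I)\alpha\in\Ker(W)$.

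The delicate point will be this kernel containment: column-space containment alone only guarantees existence of \emph{some} zero-residual $M$, but showing that the specific choice $\vec{W}W^{\dagger}$ attains the minimum and that the resulting dynamics propagate beyond the data horizon both require $\Ker(W)\subseteq\Ker(\vec{W})$, which is exactly where the hypothesis $N\geq\nu$, via $A$-invariance of the unobservable subspace, becomes indispensable; the hypothesis $T\geq N+\mu-1$ plays the dual role of saturating $\mathrm{Col}(E_c)$ to the smallest $A$-invariant subspace containing $x(0)$, which is what allows the propagation argument in the last step to cover every future time index.
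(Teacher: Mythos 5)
Your argument is correct and reaches the same conclusion as the paper, but by a genuinely different route. The paper decomposes $x(k)=\alpha(k)+\beta(k)$ along $\mathrm{Ker}(S^{\top}\mathcal{O}_N)$ and its orthogonal complement, rules out (by a somewhat delicate contradiction argument involving the smallest $A$-invariant subspace containing $x(0)$) any component of $\alpha(k)$ in $\mathrm{Ker}^{\perp}(\mathcal{O}_N)\cap\mathrm{Ker}(S^{\top}\mathcal{O}_N)$, and thereby exhibits the explicit model-based generator $M=(S^{\top}\mathcal{O}_N)A\,(S^{\top}\mathcal{O}_N)^{\dagger}$ with $\vec{W}=MW$, from which $\vec{W}W^{\dagger}$ is read off as a zero-residual minimizer. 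You instead work directly with the factorizations $W=S^{\top}\mathcal{O}_N E_c$ and $\vec{W}=S^{\top}\mathcal{O}_N A E_c$ and derive the kernel containment $\mathrm{Ker}(W)\subseteq\mathrm{Ker}(\vec{W})$ from the two invariance facts ($A$-invariance of $\mathrm{Ker}(\mathcal{O}_N)$ when $N\geq\nu$, saturation of $\mathrm{Col}(E_c)$ when $c\geq\mu$), which gives $\vec{W}W^{\dagger}W=\vec{W}$ and the propagation to all $k\in\mathbb{N}$ without ever constructing the model-based generator. Your version isolates more cleanly which hypothesis powers which step and sidesteps the paper's case analysis. One logical slip you should repair: the inclusion $\mathrm{Col}(\vec{W})\subseteq\mathrm{Col}(W)$ does \emph{not} by itself guarantee a zero-residual $M$ with $MW=\vec{W}$ --- that is the solvability condition for $\vec{W}=WX$ (right multiplication), whereas solvability of $\vec{W}=MW$ is equivalent to the row-space/kernel containment you prove immediately afterwards. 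Since the kernel containment carries the entire burden (it gives both $M^{*}W=\vec{W}$ and the extension beyond the data horizon), the column-containment step and its intermediate conclusion are redundant and should be deleted or correctly re-motivated.
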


The proof of Theorem~\ref{thm:well-posed_data-driven_feature_dyn} is
provided in
Appendix~\ref{app:proof_thm_well-posed_data-driven_feature_dyn}.
It follows from Theorem~\ref{thm:well-posed_data-driven_feature_dyn}
that, if the system is not under attack over the time horizon
$\lbrace 0, 1, \ldots, T \rbrace$ with $T \geq \nu + \mu - 1$, the
feature vector sequence $\lbrace \mathbf{w}(k) \rbrace_{k=0}^T$ is a
solution to $\mathbf{w}(k+1) = M^* \mathbf{w}(k)$, where $M^*$ can be
computed from the (nominal) output sequence
$\lbrace y(k) \rbrace_{k=0}^T$ via the
minimization~\eqref{eq:feature_space_dyn_minimization}.  This enables
the prediction of the outputs starting at time $T+1$, as
$\hat{\mathbf{y}}_{k:k+N-1} = S \mathbf{w}(k)$ for
$k \in \lbrace T-N+1, \ldots \rbrace$. An attack detection scheme can
therefore be defined based on the prediction error, where the case
$\mathbf{y}_{k:k+N-1} \neq \hat{\mathbf{y}}_{k:k+N-1}$ is
classified as $\mathrm{Attack}$.
Moreover, the maximum data-driven attack detection capability is
attained at time instant $T_{\text{safe}} = \nu + \mu$. In comparison, in the
model-based setting attack detection capability is attained at time
$T_{\text{safe}}^m = \nu$, where~$\nu$ is the observability index,
as established in Theorem~\ref{thm:well-posed_feature_dyn}
and discussed in Remark~\ref{remark:undetectable_model-based}.
We refer the reader to~\cite{FP-FD-FB:10y} for detailed accounts on 
model-based attack monitor design.
%


Although attack detection capability can be attained at
$T_{\text{safe}} = \nu + \mu$, it requires the construction of a
Hankel matrix of appropriate dimensions (with $\nu p$ rows and $\mu$
columns). However, since~$\nu$ and~$\mu$ are unknown in the
data-driven setting, achieving full detection capability at
$T_{\text{safe}} = \nu + \mu$ requires an algorithm to compute~$N$,
which is likely to increase the time complexity of the monitor.  We
therefore provide a practical data-driven heuristic in
Remark~\ref{remark:heuristic_N} for the choice of~$N$, and an estimate
in Theorem~\ref{thm:safe_time_heuristic} for the
time~$\bar{T}_\text{safe}$ at which full attack detection capability
is attained for the heuristic.

\begin{remark}{\bf \emph{(Data-driven choice of
      $N$)}}\label{remark:heuristic_N}
  Given a finite time series of outputs over a time horizon
  $\lbrace 0, \ldots, T-1 \rbrace$, and a window of size~$N$, the
  Hankel matrix satisfies $Y_{N,T} \in \real^{pN \times (T-N+1)}$. In
  our heuristic algorithm, we select $N$ to satisfy $T-N+1 \geq pN$,
  to maintain at least as many columns as there are rows. This leads
  to the choice $N = \left \lfloor \frac{T+1}{p+1} \right
  \rfloor$. \oprocend
\end{remark}

\smallskip
\begin{theorem}{\bf \emph{(Safe time horizon
      length)}}\label{thm:safe_time_heuristic}
  Let $\lbrace y(k) \rbrace_{k \in \mathbb{N}}$ be the output of
  \eqref{eq:LTI_sys}, let
  $N(T) = \left \lfloor \frac{T+1}{p+1} \right \rfloor$, and let
  $\bar{T}_\text{safe} \geq \max \left \lbrace \nu (p+1) - 1, \mu
    \left(\frac{p+1}{p}\right) -1 \right \rbrace$. Then,
  \begin{align*}
    \mathrm{Rank} \left( Y_{N(\bar{T}_{\text{safe}}),\bar{T}_{\text{safe}}} \right)
    = \Gamma\left( \lbrace y(k) \rbrace_{k \in \mathbb{N}} \right),
  \end{align*}
  where $Y_{N,T}$ and~$\Gamma\left( \lbrace y(k) \rbrace_{k \in \mathbb{N}} \right)$
  are as defined in~\eqref{eq:Hankel_matrix} and~\eqref{eq:Hankel_info_bound}.
\end{theorem}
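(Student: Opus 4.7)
\begin{pfof}{Theorem~\ref{thm:safe_time_heuristic} (proof plan)}
My plan is to reduce this to the rank-attainment argument already established in Theorem~\ref{thm:minimum_hor_length}. Recall from that proof that, writing $c = T - N + 1$, one has
\begin{align*}
\mathrm{Rank}\!\left( Y_{N,T} \right) = \mathrm{dim}\!\left( \mathrm{Ker}^{\perp}\!\left( \mathcal{O}_N \right) \cap \mathrm{Col}\!\left( \left[ x(0)~ Ax(0)~ \ldots~ A^{c-1}x(0) \right] \right) \right),
\end{align*}
and, by the defining properties of the observability index $\nu$ and the excitability index $\mu$, both subspaces stabilize as soon as $N \geq \nu$ and $c \geq \mu$ respectively. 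Hence any pair $(N,T)$ satisfying these two inequalities drives $\mathrm{Rank}(Y_{N,T})$ up to the Hankel information bound $\Gamma(\{y(k)\}_{k\in\mathbb{N}})$. So the task reduces to verifying these two inequalities for the specific choice $N = N(\bar T_{\text{safe}}) = \lfloor (\bar T_{\text{safe}}+1)/(p+1) \rfloor$.

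For the first inequality, I would use the elementary bound $\lfloor (T+1)/(p+1) \rfloor \geq (T+1)/(p+1) - 1 + 1/(p+1)$ or, more directly, note that $N(T) \geq \nu$ is equivalent to $(T+1)/(p+1) \geq \nu$, i.e.\ $T \geq \nu(p+1) - 1$, which is exactly the first lower bound in the hypothesis on $\bar T_{\text{safe}}$.

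For the second inequality, I would bound the column count $c = T - N(T) + 1$ from below using $N(T) \leq (T+1)/(p+1)$:
\begin{align*}
c(T) \;=\; T - N(T) + 1 \;\geq\; T - \tfrac{T+1}{p+1} + 1 \;=\; \tfrac{p(T+1)}{p+1}.
\end{align*}
Requiring $p(T+1)/(p+1) \geq \mu$ gives $T \geq \mu(p+1)/p - 1$, which is precisely the second lower bound in the hypothesis. Combining the two verifications, for $\bar T_{\text{safe}} \geq \max\{\nu(p+1)-1,\, \mu(p+1)/p - 1\}$ we obtain simultaneously $N(\bar T_{\text{safe}}) \geq \nu$ and $c(\bar T_{\text{safe}}) \geq \mu$, so the rank-stabilization argument from Theorem~\ref{thm:minimum_hor_length} applies verbatim and yields $\mathrm{Rank}(Y_{N(\bar T_{\text{safe}}),\bar T_{\text{safe}}}) = \Gamma(\{y(k)\}_{k\in\mathbb{N}})$.

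The only subtlety I anticipate is cosmetic rather than substantive: the floor in the definition of $N(T)$ means the bounds should be stated (and are stated) with a ``$\geq$'' that is tight only after rounding; so one must be careful that the strict bound $N(T) \leq (T+1)/(p+1)$ used above is the correct direction of the inequality for the column count, while the reverse bound $N(T) \geq (T+1)/(p+1) - p/(p+1)$ is what is used for the row count. Neither direction requires more than the standard floor inequality $x - 1 < \lfloor x \rfloor \leq x$, so no genuine obstacle arises beyond this bookkeeping.
\end{pfof}
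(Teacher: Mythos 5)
Your proposal is correct and follows essentially the same route as the paper's own proof: reduce to the two conditions $N \geq \nu$ and $T - N + 1 \geq \mu$ from Theorem~\ref{thm:minimum_hor_length}, then translate each into a lower bound on $T$ under the heuristic $N(T) = \lfloor (T+1)/(p+1) \rfloor$. You simply make explicit the floor-function arithmetic that the paper's (terser) proof leaves implicit.
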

\begin{proof}
  Following the arguments in the proof of
  Theorem~\ref{thm:minimum_hor_length}, the observability
  index~$\nu$ yields a necessary and sufficient lower bound on~$N$,
  and we get a lower bound on $T$ by allowing
  $\left \lfloor \frac{T+1}{p+1} \right \rfloor \geq \nu$.  Moreover,
  the excitability index yields a necessary and sufficient lower
  bound on the number of columns as $T-N+1 \geq \mu$.  Therefore, for
  a safe time horizon length $\bar{T}_{\text{safe}}$ satisfying
  $\bar{T}_{\text{safe}} \geq \max \left \lbrace \nu (p+1) - 1, \mu
    \left(\frac{p+1}{p}\right) -1 \right \rbrace$, we get that
  $\mathrm{Rank} \left( Y_{N(\bar{T}_{\text{safe}}),\bar{T}_{\text{safe}}} \right)$
  attains the maximum value.
\end{proof}

We next characterize undetectable attacks for the data-driven attack
monitor. Notice that any attack over the time horizon
$\lbrace 0, \ldots, \nu + \mu \rbrace$ is undetectable, since attack
detection capability is only attained at $T_\text{safe}= \nu +
\mu$. We therefore restrict our attention to attacks starting at
$T \geq T_\text{safe} + 1$.

\smallskip
\begin{theorem}{\bf \emph{(Data-driven undetectable
      attacks)}}\label{thm:undetectable_data-driven}
  The (attack) input $u : \mathbb{N} \rightarrow \real^m$ to the
  system~\eqref{eq:LTI_sys}, with $\mathbf{u}_{0:T-1} = 0$ for
  $T \geq T_{\text{safe}}+1 = \nu + \mu + 1$, is undetectable if and only if
  $\mathbf{u}_{T:T+N-1} \in \mathrm{Ker} \left( \mathcal{C}_N \right)$
  for all $N \in \mathbb{N}$.
\end{theorem}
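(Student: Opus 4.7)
The plan is to identify the data-driven monitor with the model-based one using Theorem~\ref{thm:well-posed_data-driven_feature_dyn}, and then transport the characterization in Remark~\ref{remark:undetectable_model-based}. The hypothesis $T \ge T_{\text{safe}}+1 = \nu+\mu+1$ guarantees, by Theorems~\ref{thm:minimum_hor_length} and~\ref{thm:well-posed_data-driven_feature_dyn}, that $\mathrm{Col}(S)=\mathrm{Col}(\mathcal{O}_N)$ for $N\ge\nu$ and that $M^{*}$ exactly propagates the nominal feature vector sequence; hence the monitor raises no alarm on a sliding window if and only if the observed output is indistinguishable from a genuine nominal trajectory of \eqref{eq:LTI_sys}.

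For the sufficient $(\Leftarrow)$ direction, I would assume $\mathbf{u}_{T:T+N-1}\in\mathrm{Ker}(\mathcal{C}_N)$ for every $N\in\mathbb{N}$. Since $\mathbf{u}_{0:T-1}=0$, the state $x(T)$ coincides with its nominal value, and the decomposition $\mathbf{y}_{T:T+N-1}=\mathcal{O}_N x(T)+\mathcal{C}_N\mathbf{u}_{T:T+N-1}$ collapses to $\mathbf{y}_{T:T+N-1}=\mathcal{O}_N x(T)$ for every $N$. Consequently, the outputs observed from time $T$ onward are identical to those that would be generated by \eqref{eq:LTI_sys} from the same state $x(T)$ with no attack, so the monitor observes a nominal trajectory and cannot raise an alarm.

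For the necessary $(\Rightarrow)$ direction, my plan is to argue inductively on sliding windows sweeping past $T$. The last entirely attack-free window is $\mathbf{y}_{T-N:T-1}$, whose feature vector $\mathbf{w}(T-N)=S^{\top}\mathcal{O}_N x(T-N)$ is clean, so the monitor's one-step prediction $SM^{*}\mathbf{w}(T-N)=\mathcal{O}_N x(T-N+1)$ equals the nominal output. The actual window $\mathbf{y}_{T-N+1:T}$ differs from this prediction only in its last block by $Du(T)$, so undetectability forces $Du(T)=0$. Iterating one step at a time, each increment contaminates one additional block of the window and, through the same matching condition, supplies the next Markov-parameter equation $\sum_{i=0}^{j-1}CA^{j-1-i}Bu(T+i)+Du(T+j-1)=0$, progressively assembling $\mathbf{u}_{T:T+N-1}\in\mathrm{Ker}(\mathcal{C}_N)$ for every $N$.

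The main obstacle is the self-sustaining structure of this induction: at each step I must verify that the previously derived kernel equations keep the current window's feature vector on the nominal trajectory, so that the monitor's next prediction remains $\mathcal{O}_N x(\cdot)$. This propagation is automatic because each newly established equation cancels the attack's contribution to the corresponding output block, keeping the induction alive indefinitely and aligning the data-driven characterization with its model-based counterpart in Remark~\ref{remark:undetectable_model-based}.
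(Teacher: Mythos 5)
Your proposal is correct and follows essentially the same route as the paper: use Theorem~\ref{thm:well-posed_data-driven_feature_dyn} to make $M^*$ exact on the last clean window, express the one-step prediction error as $\mathcal{C}_N\mathbf{u}_{T-N+1:T}=(0,\ldots,0,Du(T))$ to force $Du(T)=0$, and then induct window by window to assemble $\mathbf{u}_{T:T+N-1}\in\mathrm{Ker}(\mathcal{C}_N)$, with the converse following because the kernel condition makes the observed outputs coincide with a nominal trajectory. Your explicit justification that each newly derived Markov-parameter equation keeps the subsequent window nominal actually fills in the step the paper compresses into ``by induction''; the only blemish is a harmless off-by-one in the feedthrough index of your Markov equation, which should read $Du(T+j)$ rather than $Du(T+j-1)$.
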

\begin{proof}
We have $\mathbf{w}(T-N) = S^{\top} \mathbf{y}_{T-N:T-1}$
and from the feature space dynamics, we get $\mathbf{w}(T-N+1)
= M^* \mathbf{w}(T-N)$, where~$M^*$ is obtained from~\eqref{eq:feature_space_dyn_minimization}
over the time horizon $\lbrace 0, \ldots, T_{\text{safe}}\rbrace$
and the predicted output over $\lbrace T-N+1, \ldots, T \rbrace$
is given by $\hat{\mathbf{y}}_{T-N+1:T} = S \mathbf{w}(T-N+1) = SM^*S^{\top} \mathbf{y}_{T-N:T-1}$.
It follows from~\eqref{eq:output_O:N-1} and Theorem~\ref{thm:well-posed_data-driven_feature_dyn}
that the output prediction error $\mathbf{y}_{T-N+1:T} - \hat{\mathbf{y}}_{T-N+1:T} = \mathcal{C}_N \mathbf{u}_{T-N+1:T}$,
where $\mathbf{u}_{T-N+1:T} = \left(0, \ldots, 0, u(T) \right)$.
For an undetectable attack, the prediction error  
$\mathbf{y}_{T-N+1:T} - \hat{\mathbf{y}}_{T-N+1:T} = \mathcal{C}_N \mathbf{u}_{T-N+1:T} = 0$,
which implies that $\mathbf{u}_{T-N+1:T} \in \mathrm{Ker} \left( \mathcal{C}_N \right)$,
i.e., $u(T) \in \mathrm{Ker} \left( D \right)$.
By induction, we get that $\mathbf{u}_{T:T+N-1} \in \mathrm{Ker} \left( \mathcal{C}_N \right)$
for any $N \in \mathbb{N}$.
Conversely, if $\mathbf{u}_{T:T+N-1} \in \mathrm{Ker} \left( \mathcal{C}_N \right)$
for any $N \in \mathbb{N}$, we see that the prediction error
vanishes, which implies that the attack is undetectable,
thereby proving the claim.
\end{proof}
%
%
\section{Numerical experiments}
We now present results from numerical experiments 
validating the key theoretical results presented in this letter
and comparing model-based and data-driven attack monitoring.  
We considered a linear time-invariant 
system~\eqref{eq:LTI_sys} of state-space
dimension $n = 50$ defined below: 
\begin{align*}
	A = \left[ \begin{matrix}
\mathbf{0}_{(n-1) \times 1} & \vline &  I_{n-1} \\ \hline 
-1 & \vline & - \mathbf{1}_{1 \times (n-1)}
\end{matrix} \right],
\end{align*} 
with $m = 5$ actuators and $p = 10$ sensors.
%
%
The columns of~$B$ and the
rows of~$C$ were distinctly chosen at random from
$\lbrace e_i \rbrace_{i=1}^{n}$, the standard basis for~$\real^{n}$.
The matrix~$D$ was chosen to be the zero matrix.

We determined the values of the observability and
excitability indices numerically to
be~$\nu = 15$ and~$\mu = 50$, respectively.
With the data-driven heuristic
from Remark~\ref{remark:heuristic_N} 
we plot in Figure~\ref{fig:rank_Y}
the rank of the Hankel matrix~$Y_{N(T),T}$,
$\mathrm{Rank}\left( Y_{N(T),T} \right)$
as a function of $T$.
\begin{figure}[!h]
	\begin{center}
             \includegraphics[width=0.5\textwidth]{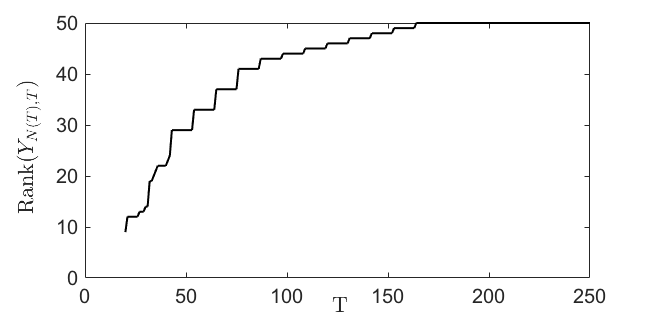}
	\end{center}
        \caption{The figure shows the plot of the rank of the Hankel
          matrix~$Y_{N(T),T}$ vs. time~$T$ with~$N$ chosen according
          to the data-driven heuristic from
          Remark~\ref{remark:heuristic_N}.  We observe that
          $\mathrm{Rank}(Y_{N(T),T})$ is a monotonically increasing
          function of $T$, and a maximum rank of $50$ is attained at
          $\bar{T}_{\text{safe}}=164$. With $p=10$ and $\nu=15$
          obtained numerically, we see that
          $\nu(p+1)-1 = 164 = \bar{T}_{\text{safe}}$, which validates
          Theorem~\ref{thm:safe_time_heuristic}. }
	\label{fig:rank_Y}
\end{figure}
We then designed model-based and data-driven attack detection monitors
based on feature space dynamics~\eqref{eq:model-based_feature_space_dyn}
and as outlined in Section~\ref{sec:data-driven_attack_monitor}, respectively.
%
%
We injected the system with an attack input at the actuator~$j=4$
at $T = 249 > \bar{T}_{\text{safe}}$, such that $u_4(249) = 1$,
and tracked the prediction error for the model-based and
data-driven attack detection monitors starting from 
a random initial state,
as shown in Figure~\ref{fig:attack_detect}.
\begin{figure}[!h]
	\begin{center}
        \hspace*{-0.2cm}     \includegraphics[width=0.52\textwidth]{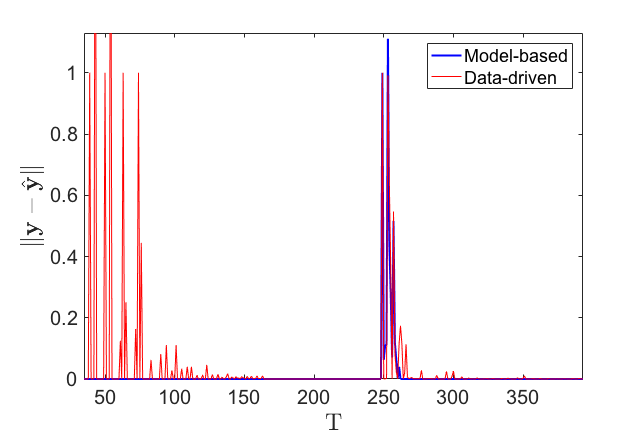}
	\end{center}
   \caption{The figure shows the responses of the model-based and
   data-driven attack monitors for an attack input $u_4(249) = 1$.
  Attack detection capability for the model-based monitor 
	is achieved at time $T = 15$ and for data-driven monitor
    at time $\bar{T}_{\text{safe}}=164$ (consistent with the result in Figure~\ref{fig:rank_Y}) 
   as seen from the convergence to zero of the prediction error,
   implicitly validating Theorems~\ref{thm:well-posed_feature_dyn} and
   \ref{thm:well-posed_data-driven_feature_dyn}.
   The attack is detected as an error in the output prediction by the
   monitors at $T = 249$.
   We also observe that the data-driven monitor recovers from the
   attack at around $T = 270$ when the prediction error is below~$0.03$ units
   ($< 3\%$ of attack input magnitude) and gradually decays to zero.}
	\label{fig:attack_detect}
\end{figure}
\section{Conclusion and future work}
In this letter, we characterized the fundamental limitations on
data-driven monitoring of linear time-invariant systems from a
systems-theoretic perspective.
In particular: (i) we characterized an information bound on the output
time series and the minimum time horizon length for measurement data
collection to achieve detection capability; (ii) we provided a
heuristic for the choice of dimensions of the Hankel matrix employed
in data-driven attack detection; and (iii) we obtained a
characterization of undetectable attacks. Surprisingly, our results
show that model-based and data-driven detection strategies share the
same limitations, thus relaxing the stringent assumption of the
knowledge of the system dynamics that is typically made in existing
security works.

Future work includes the limitations of data-driven detection in
stochastic, time-varying and nonlinear systems.

%
\bibliographystyle{unsrt}
\bibliography{alias,Main,FP,New}
\begin{appendices}
\section{Proof of Theorem~\ref{thm:well-posed_feature_dyn}}
\label{app:proof_prop_feature_space_dyn}
We have $\mathbf{w}(k+1) = {S^{(\mathrm{m}, \mathrm{nom})}}^{\top} \mathcal{O}_N x(k+1) 
= {S^{(\mathrm{m}, \mathrm{nom})}}^{\top} \mathcal{O}_N A x(k)$.
Suppose $x(k) \notin \mathrm{Ker}\left( \mathcal{O}_N \right)$,
we get that $0 \neq \mathcal{O}_N x(k) \in \mathrm{Col}\left( \mathcal{O}_N \right) 
= \mathrm{Col} \left( S^{(\mathrm{m}, \mathrm{nom})} \right)$, and we can express
$\mathcal{O}_N x(k)$ as a linear combination of the columns of $S^{(\mathrm{m}, \mathrm{nom})}$.
It then follows that ${S^{(\mathrm{m}, \mathrm{nom})}}^{\top} \mathcal{O}_N x(k) \neq 0$.
Moreover, if $x(k) \in \mathrm{Ker} \left( \mathcal{O}_N \right)$, then we have
${S^{(\mathrm{m}, \mathrm{nom})}}^{\top} \mathcal{O}_N x(k) \neq 0$. Therefore, 
we get that $\mathrm{Ker} \left( {S^{(\mathrm{m}, \mathrm{nom})}}^{\top} \mathcal{O}_N \right)
= \mathrm{Ker} \left( \mathcal{O}_N \right)$.

We note that $\mathbf{w}(k) \in \mathrm{Col} \left( {S^{(\mathrm{m}, \mathrm{nom})}}^{\top} \mathcal{O}_N \right)
\cong \mathrm{Ker}^{\perp} \left( {S^{(\mathrm{m}, \mathrm{nom})}}^{\top} \mathcal{O}_N \right)$ for all $k \in \mathbb{N}$.
We now express $x(k) \in \real^n =  \mathrm{Ker} \left( {S^{(\mathrm{m}, \mathrm{nom})}}^{\top} \mathcal{O}_N \right) 
\oplus \mathrm{Ker}^{\perp} \left( {S^{(\mathrm{m}, \mathrm{nom})}}^{\top} \mathcal{O}_N \right)$ as 
$x(k) = \alpha(k) + \beta(k)$, where $\alpha(k) \in \mathrm{Ker} \left( {S^{(\mathrm{m}, \mathrm{nom})}}^{\top} \mathcal{O}_N \right)$
and $\beta(k) \in \mathrm{Ker}^{\perp} \left( {S^{(\mathrm{m}, \mathrm{nom})}}^{\top} \mathcal{O}_N \right)$.
We have $\mathbf{w}(k) = {S^{(\mathrm{m}, \mathrm{nom})}}^{\top} \mathcal{O}_N x(k) = {S^{(\mathrm{m}, \mathrm{nom})}}^{\top} \mathcal{O}_N \beta(k)$.
Since ${S^{(\mathrm{m}, \mathrm{nom})}}^{\top} \mathcal{O}_N$ is a bijection from 
$\mathrm{Ker}^{\perp} \left( {S^{(\mathrm{m}, \mathrm{nom})}}^{\top} \mathcal{O}_N \right)$ to 
$\mathrm{Col} \left( {S^{(\mathrm{m}, \mathrm{nom})}}^{\top} \mathcal{O}_N \right)$, with ${\left( {S^{(\mathrm{m}, \mathrm{nom})}}^{\top} \mathcal{O}_N \right)}^{\dagger}$
as its inverse, we get that $\beta(k) = {\left( {S^{(\mathrm{m}, \mathrm{nom})}}^{\top} \mathcal{O}_N \right)}^{\dagger} \mathbf{w}(k)$.

Now, we have $\mathbf{w}(k+1) = {S^{(\mathrm{m}, \mathrm{nom})}}^{\top} \mathcal{O}_N x(k+1) 
= {S^{(\mathrm{m}, \mathrm{nom})}}^{\top} \mathcal{O}_N A x(k) = {S^{(\mathrm{m}, \mathrm{nom})}}^{\top} \mathcal{O}_N A \left( \alpha(k) + \beta(k) \right)
= {S^{(\mathrm{m}, \mathrm{nom})}}^{\top} \mathcal{O}_N A \alpha(k) + {S^{(\mathrm{m}, \mathrm{nom})}}^{\top} \mathcal{O}_N A \beta(k) 
= {S^{(\mathrm{m}, \mathrm{nom})}}^{\top} \mathcal{O}_N A \alpha(k) + {S^{(\mathrm{m}, \mathrm{nom})}}^{\top} \mathcal{O}_N A 
{\left( {S^{(\mathrm{m}, \mathrm{nom})}}^{\top} \mathcal{O}_N \right)}^{\dagger} \mathbf{w}(k)$.
Since $\alpha(k) \in \mathrm{Ker} \left( {S^{(\mathrm{m}, \mathrm{nom})}}^{\top} \mathcal{O}_N \right) = 
\mathrm{Ker} \left( \mathcal{O}_N \right)$, we have $CA^{i} \alpha(k) = 0$ for all $i \in \lbrace 0, \ldots, N-1 \rbrace$.
Now, if $N \geq \nu$, the observability index of the system~\eqref{eq:LTI_sys}, then we have
$\mathrm{Rank}(\mathcal{O}_N) = \mathrm{Rank}(\mathcal{O}_{N+1})$, and since 
$\mathrm{Ker}(\mathcal{O}_{N+1}) \subseteq \mathrm{Ker}(\mathcal{O}_{N})$, it follows
from the Rank-Nullity Theorem that $\mathrm{Ker}(\mathcal{O}_N) = \mathrm{Ker}(\mathcal{O}_{N+1})$.
We therefore get that $CA^{N} \alpha(k) = 0$, which implies that $\mathcal{O}_N A \alpha(k) = 0$, or in other words, 
$A \alpha(k) \in \mathrm{Ker} \left( \mathcal{O}_N \right) = \mathrm{Ker} \left( {S^{(\mathrm{m}, \mathrm{nom})}}^{\top} \mathcal{O}_N \right)$.
This is equivalent to stating that the $\mathrm{Ker} \left( {S^{(\mathrm{m}, \mathrm{nom})}}^{\top} \mathcal{O}_N \right)$
is an invariant subspace of $A$ when $N \geq \nu$, the observability index of~\eqref{eq:LTI_sys}.
Therefore, we get that $\mathbf{w}(k+1) = \left( {S^{(\mathrm{m}, \mathrm{nom})}}^{\top} \mathcal{O}_N \right) A 
{\left( {S^{(\mathrm{m}, \mathrm{nom})}}^{\top} \mathcal{O}_N \right)}^{\dagger} \mathbf{w}(k)$.

\section{Proof of Theorem~\ref{thm:well-posed_data-driven_feature_dyn}}
\label{app:proof_thm_well-posed_data-driven_feature_dyn}
We first recall that
$Y_{N,T} = \mathcal{O}_N  \left[ \begin{matrix}	x(0) & Ax(0) & \ldots & A^{T-N} x(0) \end{matrix} \right]
			   = \left[ \begin{matrix}	\mathcal{O}_N x(0) & \mathcal{O}_N A x(0) & \ldots & \mathcal{O}_N A^{T-N} x(0) \end{matrix} \right]$
%
%
We have $\mathbf{w}(k) = S^{\top} \mathcal{O}_N x(k)$ for all $k \in \mathbb{N}$. Let
$x(k) = \alpha(k) + \beta(k)$, where $\alpha(k) \in \mathrm{Ker} \left( S^{\top} \mathcal{O}_N \right)$
and $\beta(k) \in \mathrm{Ker}^{\perp} \left( S^{\top} \mathcal{O}_N \right)$.
We therefore get that $\mathbf{w}(k) = S^{\top} \mathcal{O}_N \beta(k)$.
Since $S^{\top} \mathcal{O}_N$ is a bijection from $\mathrm{Ker}^{\perp} \left( S^{\top} \mathcal{O}_N \right)$
to $\mathrm{Col} \left( S^{\top} \mathcal{O}_N \right)$, we have that $\beta(k) = 
\left( S^{\top} \mathcal{O}_N \right)^{\dagger} \mathbf{w}(k)$.

Now, since $N \geq \nu$ and $T-N \geq \mu -1$, 
we get from Theorem~\ref{thm:minimum_hor_length}
that $Y_{N,T}$ is of maximum rank and
we have $\mathbf{w}(k+1) = S^{\top} \mathcal{O}_N x(k+1) = S^{\top} \mathcal{O}_N A x(k)
= S^{\top} \mathcal{O}_N A \left( \alpha(k) + \beta(k) \right)
= S^{\top} \mathcal{O}_N A \alpha(k) + S^{\top} \mathcal{O}_N A 
\left( S^{\top} \mathcal{O}_N \right)^{\dagger} \mathbf{w}(k)$.

Suppose $\alpha(k) \in \mathrm{Ker} \left( \mathcal{O}_N \right)$, and since $N \geq \nu$
we get that $\mathrm{Ker} \left( \mathcal{O}_N \right)$ is $A$-invariant, and we get that
$A \alpha \in \mathrm{Ker} \left( \mathcal{O}_N \right)$, or in other words,
$S^{\top} \mathcal{O}_N A \alpha(k) = 0$.
Now suppose, on the other hand, that $\alpha(k) \in \mathrm{Ker}^{\perp} \left( \mathcal{O}_N \right) \cap 
\mathrm{Ker} \left( S^{\top} \mathcal{O}_N \right)$, we would then indeed have
$x(k) \in \mathrm{Ker}^{\perp} \left( \mathcal{O}_N \right)$.
Moreover, let $V \subseteq \mathrm{Ker}^{\perp} \left( \mathcal{O}_N \right)$ 
be the smallest $A$-invariant subset of $\mathrm{Ker}^{\perp}
\left( \mathcal{O}_N \right)$ containing $x(0)$. This implies that
$x(k) \in V \subseteq \mathrm{Ker}^{\perp} \left( \mathcal{O}_N \right)$.
 Since the columns of $S$
form a basis of $\mathcal{O}_N V$, we get that $\mathcal{O}_N x(k)$
can be expressed as a linear combination of the columns of $S$, i.e.,
that $\mathcal{O}_N x(k) = S v(k)$, with $v(k) \neq 0$ and we get
$x(k) = \mathcal{O}_N^{\dagger} S v(k)$. 
It then follows that $\mathcal{O}_N x(k)$ does not have a component
orthogonal to all the column vectors of $S$, which implies that
$\alpha(k) = 0$ and we get a contradiction. Therefore,
$\alpha(k) \notin \mathrm{Ker}^{\perp} \left( \mathcal{O}_N \right) \cap 
\mathrm{Ker} \left( S^{\top} \mathcal{O}_N \right)$.
Thus, we get that $\mathbf{w}(k+1) = \left( S^{\top} \mathcal{O}_N \right) A 
\left( S^{\top} \mathcal{O}_N \right)^{\dagger} \mathbf{w}(k)$.
and we can write $\vec{W} = \left( S^{\top} \mathcal{O}_N \right) A 
\left( S^{\top} \mathcal{O}_N \right)^{\dagger} W$.
It is clear from the above that $\mathrm{Col}\left( \vec{W}^{\top} \right) 
\subseteq \mathrm{Col}\left( W^{\top} \right)$,
from which we infer that $\left( S^{\top} \mathcal{O}_N \right) A 
\left( S^{\top} \mathcal{O}_N \right)^{\dagger} = \vec{W} W^{\dagger}$ 
is the global minimizer for the minimization 
problem~\eqref{eq:feature_space_dyn_minimization}
and that the minimum value is zero.
\end{appendices}
\end{document}